\newtheorem{theorem}{Theorem}
\newtheorem{lemma}{Lemma}
\newtheorem{definition}{Definition}
\newtheorem{remark}{Remark}
\newtheorem{assumption}{Assumption}
\newcommand{\sr}{\stackrel}
\newcommand{\rar}{\rightarrow}
\newcommand{\tri}{\sr{\triangle}{=}}
\newcommand{\noi}{\noindent}
\newcommand{\be}{\begin{equation}}
\newcommand{\ee}{\end{equation}}
\newcommand{\bea}{\begin{eqnarray}}
\newcommand{\eea}{\end{eqnarray}}
\newcommand{\bes}{\begin{eqnarray*}}
\newcommand{\ees}{\end{eqnarray*}}
\newcommand{\bfi}{\begin{figure}}
\newcommand{\bfit}{\begin{figure}[t]}
\newcommand{\bfib}{\begin{figure}[b]}
\newcommand{\bfih}{\begin{figure}[h]}
\newcommand{\bfip}{\begin{figure}[p]}
\newcommand{\efi}{\end{figure}}
\newcommand{\bi}{\begin{itemize}}
\newcommand{\ei}{\end{itemize}}
\newcommand{\ben}{\begin{enumerate}}
\newcommand{\een}{\end{enumerate}}
\title{ \LARGE\bf
On the relation of nonanticipative rate distortion function and filtering theory}
\author{Charalambos D. Charalambous and Photios A. Stavrou
\thanks{*This work was financially supported by a medium size University of Cyprus grant entitled ``DIMITRIS''.}
\thanks{The authors are with the Department of Electrical and Computer Engineering (ECE), University of Cyprus, Nicosia, CYPRUS
        {\tt\small chadcha@ucy.ac.cy, stavrou.fotios@ucy.ac.cy}.}%
}
\begin{document}

\maketitle
\thispagestyle{empty}
\pagestyle{empty}

\begin{abstract}

In this paper the relation between nonanticipative rate distortion function (RDF) and Bayesian filtering theory is investigated using the topology of weak convergence of probability measures on Polish spaces. The relation is established via an optimization on the space of conditional distributions of the so-called directed information subject to fidelity constraints. Existence of the optimal reproduction distribution of the nonanticipative RDF is shown, while the optimal nonanticipative reproduction conditional distribution for stationary processes is derived in closed form. The realization procedure of nonanticipative RDF which is equivalent to joint-source channel matching for symbol-by-symbol transmission is described, while an example is introduced to illustrate the concepts.

\end{abstract}

\section{Introduction}

This paper is concerned with the abstract formulation of nonanticipative rate distortion function (RDF) on Polish spaces (complete separable metric spaces) and its relation to filtering theory. In the past, rate distortion (or distortion rate) functions and filtering theory have evolved independently. Specifically, classical RDF addresses the problem of reproduction of a process subject to a fidelity criterion without much emphasis on the realization of the reproduction conditional distribution via nonanticipative operations. On the other hand, filtering theory is developed by imposing real-time realizability on estimators with respect to measurement data.\\ 
\noi Historically, the work of R. Bucy  \cite{bucy} appears to be the first to consider the direct relation between distortion rate function and filtering. The work of A. K. Gorbunov and M. S. Pinsker \cite{gorbunov91} on $\epsilon$-entropy defined via a nonanticipative constraint on the reproduction distribution of the RDF, although not directly related to the realizability question pursued by  Bucy, computes the nonanticipative RDF for stationary Gaussian processes via power spectral densities. \\
\noi The objective of this paper is to investigate the connection between nonanticipative RDF and filtering theory for general distortion functions and random processes on abstract Polish spaces using the topology of weak convergence.\\ 
The main results discussed in this paper are the following.\\
{\bf(1)} Existence of optimal reproduction distribution minimizing directed information using the topology of weak convergence of probability measures on Polish spaces;\\
{\bf(2)} Closed form expression of the optimal reproduction conditional distribution for stationary processes;\\
{\bf(3)} Realization procedure of the filter;\\
{\bf(4)} Example to demonstrate the realization of the filter;\\
{\bf(5)} Connection between nonanticipative RDF and joint source-channel coding of symbol-by-symbol transmission \cite{gastpar2003}.\\
\noi {\it Motivation.} This work is motivated by applications in which estimators are desired to have specific accuracy, by control over limited rate communication channel applications \cite{tatikonda-mitter2004b,nair-evans2004}, and by the desire to provide necessary conditions for symbol-by-symbol or uncoded transmission \cite{gastpar2003} for sources with memory without anticipation.  
\par First, we give a brief high level discussion on nonanticipative RDF and filtering theory, and discuss their connection. Consider a discrete-time process $X^n\tri\{X_0,X_1,\ldots,X_n\}\in{\cal X}_{0,n} \tri \times_{i=0}^n{\cal X}_i$, and its reproduction $Y^n\tri\{Y_0,Y_1,\ldots,Y_n\}\in{\cal Y}_{0,n} \tri \times_{i=0}^n{\cal Y}_i$ where ${\cal X}_i$ and ${\cal Y}_i$ are Polish spaces.\\
\noi{\it Bayesian Estimation Theory.} In classical filtering, one is given a mathematical model that generates the process $X^n$, $\{P_{X_i|X^{i-1}}(dx_i|x^{i-1}):i=0,1,\ldots,n\}$, a mathematical model that generates observed data obtained from sensors, say, $Z^n$, $\{P_{Z_i|Z^{i-1},X^i}$ $(dz_i|z^{i-1},x^i):i=0,1,\ldots,n\}$, while $Y^n$ are the causal estimates of some function of the process $X^n$ based on the observed data $Z^n$.
The classical Kalman Filter is a well-known example, where $\widehat{X}_i =\mathbb{E}[X_i | Z^{i-1}],~i=0,1,\ldots,n$, is the conditional mean which minimizes the average least-squares estimation error. Fig. 1 is the block diagram of the filtering problem.	
\begin{figure}[ht]
\centering
\includegraphics[scale=0.5]{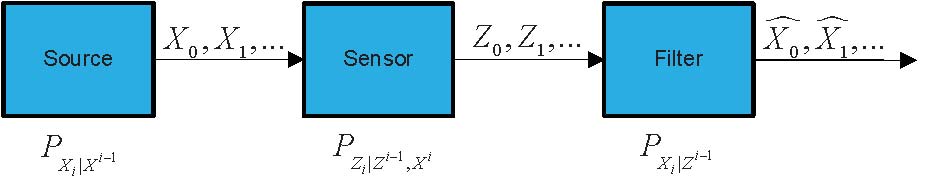}
\caption{Filtering problem.}
\label{filtering}
\end{figure}

\noi{\it Nonanticipative Rate Distortion Theory and Estimation.} In nonanticipative rate distortion theory one is given a  distribution for the process $X^n$, which induces $\{P_{X_i|X^{i-1}}(dx_i|x^{i-1}):~i=0,1,\ldots,n\}$, and determines the nonanticipative reproduction conditional distribution $\{P_{Y_i|Y^{i-1},X^i}(dy_i|y^{i-1},x^i):~i=0,1,\ldots,n\}$ which minimizes the directed information from $X^n$ to $Y^n$ subject to distortion or fidelity constraint. The filter $\{Y_i:~i=0,1,\ldots,n\}$ of $\{X_i:~i=0,1,\ldots,n\}$ is found by realizing the optimal reproduction distribution $\{P_{Y_i|X^{i-1},X^i}(dy_i|y^{i-1},x^i):~i=0,1,\ldots,n\}$ via a cascade of sub-systems as shown in Fig. 2. Thus, in nonanticipative rate distortion theory the observation or mapping from $\{X_i:~i=0,1,\ldots,n\}$ to $\{Z_i:~i=0,1,\ldots,n\}$ is part of the realization procedure, while in filtering theory, this mapping is given \'a priori. 
\begin{figure}[ht]
\centering
\includegraphics[scale=0.5]{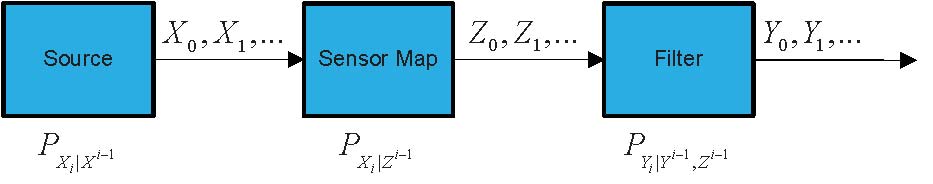}
\caption{Filtering via nonanticipative rate distortion function.}
\label{filtering_and_causal}
\end{figure}

\noi The precise problem formulation necessitates  the definitions of  distortion function or fidelity, and directed information.\\
The distortion function or fidelity constraint \cite{berger} between $x^n$ and its reproduction $y^n$, is a measurable function  $d_{0,n} : {\cal X}_{0,n} \times {\cal Y}_{0,n} \rar [0, \infty]$ defined by
\begin{align*}
d_{0,n}(x^n,y^n)\tri\frac{1}{n+1}\sum^n_{i=0}\rho_{0,i}(x^i,y^i).
\end{align*}
Directed information  from a sequence of Random Variables (RV's) $X^n\tri\{X_0,X_1,\ldots,X_n\}\in{\cal X}_{0,n}\tri\times_{i=0}^n{\cal X}_i$, to another sequence $Y^n\tri\{Y_0,Y_1,\ldots,Y_n\}\in{\cal Y}_{0,n}\tri\times_{i=0}^n{\cal Y}_i$ is often defined via \cite{massey90,charalambous-stavrou2012}\footnote[4]{Unless otherwise, integrals with respect to probability distributions are over the spaces on which these are defined.} 
\begin{align}
&I(X^n\rightarrow{Y}^n)\tri\sum_{i=0}^n{I}(X^i;Y_i|Y^{i-1})\nonumber\\
&=\sum_{i=0}^n\int\log\Big(\frac{P_{Y_i|Y^{i-1},X^i}(dy_i|y^{i-1},x^i)}{{P}_{Y_i|Y^{i-1}}(dy_i|y^{i-1})}\Big)P_{X^i,Y^i}(dx^i,dy^i)\nonumber\\
&\equiv\mathbb{I}_{X^n\rightarrow{Y^n}}(P_{X_i|X^{i-1},Y^{i-1}},P_{Y_i|Y^{i-1},X^i}:~i=0,1,\ldots,n).\nonumber
\end{align}
\noi In this paper, it is assumed that $\forall~i=0,1,\ldots,n$,
\begin{align}
P_{X_i|X^{i-1},Y^{i-1}}(dx_i|x^{i-1},y^{i-1})=P_{X_i|X^{i-1}}(dx_i|x^{i-1}).\nonumber 
\end{align}
The above assumption states that the process $\{X_i:~i=0,1,\ldots,n\}$ is conditionally independent of $Y^{i-1}=y^{i-1}$ given knowledge of $X^{i-1}=x^{i-1}$, and it is implied by the following conditional independence, $P_{Y_i|Y^{i-1},X^{\infty}}$ $(dy_i|y^{i-1},x^{\infty})=P_{Y_i|Y^{i-1},X^i}(dy_i|y^{i-1},x^i)-a.s.,~\forall~i=0,1,\ldots,n$. The last assumption implies that the reproduction of $Y_i$ does not depend on future values $X_{i+1}^{\infty}\tri\{X_{i+1},X_{i+2},\ldots,X_{\infty}\}$.\\
Given a sequence of source distributions $\{{P}_{X_i|X^{i-1}}(\cdot|\cdot):~i=0,1,\ldots,n\}$ and a sequence of reproduction conditional distributions $\{P_{Y_i|Y^{i-1},X^i}(\cdot|\cdot,\cdot):~i=0,1,\ldots,n\}$ define the joint distribution $P_{X^n,Y^n}(dx^n,dy^n)={P}_{X_i|X^{i-1}}(dx_i|x^{i-1})\otimes{P}_{Y_i|Y^{i-1},X^i}(dy_i|y^{i-1},x^i)$. The  nonanticipative RDF is a special case of directed information defined by 
\begin{align}
&I_{P_{X^n}}(X^n\rightarrow{Y^n})\nonumber\\
&=\mathbb{I}_{X^n\rightarrow{Y^n}}(P_{X_i|X^{i-1}},P_{Y_i|Y^{i-1},X^i}:i=0,1,\ldots,n).\nonumber
\end{align}
\noi {\it Nonanticipative RDF.}
The nonanticipative RDF is defined by
\begin{equation}
{R}^{na}_{0,n}(D)\tri \inf_{\substack{P_{Y_i|Y^{i-1},X^i}(\cdot|\cdot,\cdot),\\~i=0,1,\ldots,n:\\
\mathbb{E}\big\{d_{0,n}(X^n,Y^n)\leq{D}\big\}}}I_{P_{X^n}}(X^n\rightarrow{Y^n}).\label{7}
\end{equation}
The definition of the nonanticipative RDF is consistent with \cite{gorbunov-pinsker} in which nonanticipation is defined via the Markov chain (MC) $X_{n+1}^\infty \leftrightarrow X^n \leftrightarrow Y^n$, e.g., $P_{Y^n|X^{\infty}}(dy^n|x^{\infty})=P_{Y^n|X^n}(dy^n|x^n)$. Therefore, by finding the solution of  (\ref{7}), then one can realize it via a channel from which one can construct an optimal filter via nonanticipative operations as in Fig.~\ref{filtering_and_causal}. One can view the sensor map as consisting of an encoder and a channel, thus draw relations to symbol-by-symbol and uncoded transmission in information theory \cite{gastpar2003}.
\par This paper is organized as follows. Section~\ref{abstract} discusses the formulation on abstract spaces. Section~\ref{existence} establishes  existence of optimal minimizing  distribution,  and Section~\ref{necessary} derives the optimal minimizing distribution for stationary processes. Section~\ref{realization1} describes the  realization of nonanticipative RDF, while Section~\ref{example} provides an example. 
\section{Abstract Formulation}\label{abstract}

The source and reproduction alphabets are sequences of Polish spaces \cite{dupuis-ellis97}. Probability distributions on any measurable space  $( {\cal Z}, {\cal B}({\cal Z}))$ are denoted by ${\cal M}_1({\cal Z})$. For $({\cal X}, {\cal B}({\cal X})), ({\cal Y}, {\cal B}({\cal Y}))$  measurable spaces, the set of conditional distributions  $P_{Y|X}(\cdot|X=x)$ is denoted by ${\cal Q}({\cal Y};{\cal X})$ and these are equivalent to stochastic kernels on $({\cal Y},{\cal B}({\cal Y}))$ given $({\cal X},{\cal B}({\cal X}))$.\\
Given the process distributions $P_{X^n}(dx^n)$ and $\{P_{Y_i|Y^{i-1},X^i}(dy_i|y^{i-1},x^i):~i=0,1,\ldots,n\}$ the following probability distributions are defined.\\
({\bf P1}): The reproduction conditional probability distribution ${\overrightarrow P}_{Y^n|X^n}\in \overrightarrow{\cal Q}({\cal Y}_{0,n};{\cal X}_{0,n})$:
\begin{equation}
{\overrightarrow P}_{Y^n|X^n}(dy^n|x^n) \tri \otimes^n_{i=0}P_{Y_i|Y^{i-1},X^i}(dy_i|y^{i-1},x^i).\nonumber
\end{equation}
({\bf P2}): The joint probability distribution $P_{X^n,Y^n}\in {\cal M}_1({\cal Y}_{0,n}\times {\cal X}_{0, n})$ for $G_{0,n} \in {\cal B}({\cal X}_{0,n})\times{\cal B}({\cal Y}_{0,n})$:
\begin{align}
P_{X^n,Y^n}&(G_{0,n})\tri(P_{X^n} \otimes \overrightarrow{P}_{Y^n|X^n})(G_{0,n})\nonumber\\
&=\int \overrightarrow{P}_{Y^n|X^n}(G_{0,n,x^n}|x^n)\otimes{P}_{X^n}(d{x^n})\nonumber
\end{align}
where $G_{0,n,x^n}$ is the $x^n-$section of $G_{0,n}$ at point ${x^n}$ defined by $G_{0,n,x^n}\tri \{y^n \in {\cal Y}_{0,n}: (x^n, y^n) \in G_{0,n}\}$ and $\otimes$ denotes the convolution.\\
({\bf P3}): The marginal distribution $P_{Y^n}\in {\cal M}_1({\cal Y}_{0,n})$:
\begin{align}
P_{Y^n}&(F_{0,n})\tri P({\cal X}_{0, n} \times F_{0,n}),~F_{0,n} \in {\cal B}({\cal Y}_{0,n})\nonumber\\
&=\int \overrightarrow{P}_{Y^n|X^n}(F_{0,n}|x^n) P_{X^n}(d{x^n}).\nonumber
\end{align}
Define
\begin{align*}
&\overrightarrow{\cal Q}({\cal Y}_{0,n};{\cal X}_{0,n})=\Big\{{\overrightarrow P}_{Y^n|X^n}(dy^n|x^n)\in{\cal Q}({\cal Y}_{0,n};{\cal X}_{0,n}):\\
&{\overrightarrow P}_{Y^n|X^n}(dy^n|x^n) \tri \otimes^n_{i=0}P_{Y_i|Y^{i-1},X^i}(dy_i|y^{i-1},x^i)\Big\}.
\end{align*}
\noi Directed information (special case) is defined via the Kullback-Leibler distance:
\begin{align}
&I_{P_{X^n}}(X^n\rightarrow{Y^n})\tri\mathbb{D}(P_{X^n,Y^n}|| P_{X^n}\times{P_{Y^n}})\nonumber\\
&=\mathbb{D}(P_{X^n}\otimes{\overrightarrow{P}}_{Y^n|X^n}||P_{X^n}\times{P}_{Y^n})\nonumber\\
&=\int\log \Big( \frac{d  (P_{X^n} \otimes \overrightarrow{P}_{Y^n|X^n}) }{d ( P_{X^n} \times P_{Y^n} ) }\Big) d(P_{X^n} \otimes\overrightarrow{P}_{Y^n|X^n}) \nonumber\\
&\equiv \mathbb{I}_{X^n\rightarrow{Y^n}}(P_{X^n},\overrightarrow{P}_{Y^n|X^n}).  \label{re3}
 \end{align}
Note that (\ref{re3}) states that directed information is expressed as a functional of $\{P_{X^n},\overrightarrow{P}_{Y^n|X^n}\}$. \\
Next, the definition of nonanticipative RDF is given.
\begin{definition}\label{def1}
$(${\bf Nonanticipative RDF}$)$
Suppose $d_{0,n}\tri\sum^n_{i=0}\rho_{0,i}(x^i,y^i)$ is measurable, and let $\overrightarrow{\cal Q}_{0,n}(D)$ (assuming is non-empty) denotes the fidelity set 
\begin{align}
&\overrightarrow{\cal Q}_{0,n}(D)\tri\big\{\overrightarrow{P}_{Y^n|X^n} \in \overrightarrow{\cal Q}({\cal Y}_{0,n};{\cal X}_{0,n}) :~\ell_{d_{0,n}}(\overrightarrow{P}_{Y^n|X^n})\nonumber \\
&\tri\int d_{0,n}(x^n,y^n) 
\overrightarrow{P}_{Y^n|X^n}(dy^n|x^n)\otimes{P}_{X^n}(dx^n)\leq D\big\}\label{eq2}
\end{align}
where $D\geq0$. The nonanticipative RDF is defined by
\begin{align}
{R}^{na}_{0,n}(D) \tri  \inf_{{\overrightarrow{P}_{Y^n|X^n}\in \overrightarrow{\cal Q}_{0,n}(D)}}{\mathbb I}_{X^n\rightarrow{Y^n}}({P}_{X^n},\overrightarrow{P}_{Y^n|X^n}).\label{ex12}
\end{align}
\end{definition}
Clearly, ${R}^{na}_{0,n}(D)$ is characterized by minimizing $\mathbb{I}_{X^n\rightarrow{Y^n}}({P}_{X^n},\overrightarrow{P}_{Y^n|X^n})$ over $\overrightarrow{\cal Q}_{0,n}(D)$.

\section{Existence of Reproduction Distribution}\label{existence}

\par In this section, the existence of the minimizing $(n+1)$-fold convolution of conditional distributions in (\ref{ex12}) is established  by using the topology of weak convergence of probability measures on Polish spaces. First, we state some properties derived in \cite{charalambous-stavrou2012}.
\begin{theorem}\label{convexity_properties}\cite{charalambous-stavrou2012}
Let $\{{\cal X}_n:~n\in\mathbb{N}\}$ and  $\{{\cal Y}_n:~n\in\mathbb{N}\}$ be Polish spaces. Then\\
{\bf(1)} The set $\overrightarrow{\cal Q}({\cal Y}_{0,n};{\cal X}_{0,n})$ is convex.\\
{\bf(2)} ${\mathbb I}_{X^n\rightarrow{Y^n}}({P}_{X^n},\overrightarrow{P}_{Y^n|X^n})$ is a convex functional of $\overrightarrow{P}_{Y^n|X^n}\in\overrightarrow{\cal Q}({\cal Y}_{0,n};{\cal X}_{0,n})$ for a fixed $P_{X^n}\in{\cal M}_1({\cal X}_{0,n})$.\\
{\bf(3)} The set $\overrightarrow{\cal Q}_{0,n}(D)$ is convex.
\end{theorem}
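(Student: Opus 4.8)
The plan is to establish each of the three claims by reducing them to elementary properties of convex sets, convexity of relative entropy, and the structure of the directed-information functional as an expectation of a conditional Kullback--Leibler divergence.

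\textbf{Convexity of $\overrightarrow{\cal Q}({\cal Y}_{0,n};{\cal X}_{0,n})$.} First I would take two elements $\overrightarrow{P}^{(0)}_{Y^n|X^n}=\otimes_{i=0}^n P^{(0)}_{Y_i|Y^{i-1},X^i}$ and $\overrightarrow{P}^{(1)}_{Y^n|X^n}=\otimes_{i=0}^n P^{(1)}_{Y_i|Y^{i-1},X^i}$ in the set, fix $\lambda\in[0,1]$, and consider the mixture $\overrightarrow{P}^{(\lambda)}_{Y^n|X^n}\tri\lambda\overrightarrow{P}^{(1)}_{Y^n|X^n}+(1-\lambda)\overrightarrow{P}^{(0)}_{Y^n|X^n}$. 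The point to verify is that this mixture again admits a factorization as an $(n+1)$-fold convolution of stochastic kernels, i.e. that it lies in $\overrightarrow{\cal Q}({\cal Y}_{0,n};{\cal X}_{0,n})$ and not merely in ${\cal Q}({\cal Y}_{0,n};{\cal X}_{0,n})$. I would do this by disintegration: the mixture is a well-defined stochastic kernel on ${\cal Y}_{0,n}$ given ${\cal X}_{0,n}$, and since the ${\cal Y}_i$ are Polish, it disintegrates into a chain of conditional kernels $P^{(\lambda)}_{Y_i|Y^{i-1},X^i}$; one then checks these are genuine stochastic kernels (measurability in the conditioning variable, probability measures in the forward variable). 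This disintegration step is the main obstacle here, because the per-stage kernels of the mixture are \emph{not} the convex combinations of the per-stage kernels of the endpoints --- the weights get reshaped by the marginals --- so one must argue existence of the factorization abstractly rather than by an explicit formula.

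\textbf{Convexity of the functional $\overrightarrow{P}_{Y^n|X^n}\mapsto{\mathbb I}_{X^n\rightarrow{Y^n}}({P}_{X^n},\overrightarrow{P}_{Y^n|X^n})$.} Here I would use the representation (\ref{re3}) of directed information as the relative entropy $\mathbb{D}(P_{X^n}\otimes\overrightarrow{P}_{Y^n|X^n}\,\|\,P_{X^n}\times P_{Y^n})$, and exploit the standard fact that $(\mu,\nu)\mapsto\mathbb{D}(\mu\|\nu)$ is jointly convex. With $P_{X^n}$ fixed, the maps $\overrightarrow{P}_{Y^n|X^n}\mapsto P_{X^n}\otimes\overrightarrow{P}_{Y^n|X^n}$ and $\overrightarrow{P}_{Y^n|X^n}\mapsto P_{Y^n}$ (and hence $\overrightarrow{P}_{Y^n|X^n}\mapsto P_{X^n}\times P_{Y^n}$) are affine, as is evident from (\textbf{P2}) and (\textbf{P3}). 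Precomposing a jointly convex functional with an affine map in each argument yields a convex functional, which gives claim (2); the only care needed is with $+\infty$ values, handled by the convention that $\mathbb{D}$ is lower semicontinuous and $[0,\infty]$-valued. This step is essentially a citation to the joint convexity of relative entropy plus affineness bookkeeping.

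\textbf{Convexity of the fidelity set $\overrightarrow{\cal Q}_{0,n}(D)$.} For the last claim I would observe that the constraint functional $\ell_{d_{0,n}}(\overrightarrow{P}_{Y^n|X^n})=\int d_{0,n}(x^n,y^n)\,\overrightarrow{P}_{Y^n|X^n}(dy^n|x^n)\otimes P_{X^n}(dx^n)$ is, by (\ref{eq2}), \emph{linear} (hence affine) in $\overrightarrow{P}_{Y^n|X^n}$, since integration against a fixed measurable function $d_{0,n}\ge 0$ is linear in the kernel. Therefore $\overrightarrow{\cal Q}_{0,n}(D)=\{\overrightarrow{P}_{Y^n|X^n}\in\overrightarrow{\cal Q}({\cal Y}_{0,n};{\cal X}_{0,n}):\ell_{d_{0,n}}(\overrightarrow{P}_{Y^n|X^n})\le D\}$ is the intersection of the convex set $\overrightarrow{\cal Q}({\cal Y}_{0,n};{\cal X}_{0,n})$ from claim (1) with the sublevel set of an affine functional, and an intersection of convex sets is convex. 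This step is immediate once claim (1) is in hand, so the whole proof rests on the disintegration argument in the first part.
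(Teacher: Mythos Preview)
The paper does not actually prove this theorem; it is quoted verbatim from \cite{charalambous-stavrou2012} with no argument given here, so there is no in-paper proof to compare against. Your overall strategy is the natural one and parts (2) and (3) are clean: joint convexity of $\mathbb{D}(\cdot\|\cdot)$ together with the affineness of $\overrightarrow{P}_{Y^n|X^n}\mapsto P_{X^n}\otimes\overrightarrow{P}_{Y^n|X^n}$ and $\overrightarrow{P}_{Y^n|X^n}\mapsto P_{X^n}\times P_{Y^n}$ gives (2), and (3) is the intersection of a convex set with an affine sublevel set.

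For part (1), however, you glide over the one substantive point. Disintegration on Polish spaces always produces factors $P^{(\lambda)}_{Y_i|Y^{i-1},X^n}$; membership in $\overrightarrow{\cal Q}({\cal Y}_{0,n};{\cal X}_{0,n})$ requires in addition that each factor depend only on $x^i$, not on $x_{i+1}^n$. You write the factors as $P^{(\lambda)}_{Y_i|Y^{i-1},X^i}$ from the outset, which is precisely the claim to be justified. The missing observation is short but essential: for each $i$ the $Y^i$-marginal of the mixture satisfies
\[
\overrightarrow{P}^{(\lambda)}_{Y^i|X^n}=\lambda\,\overrightarrow{P}^{(1)}_{Y^i|X^i}+(1-\lambda)\,\overrightarrow{P}^{(0)}_{Y^i|X^i},
\]
which depends only on $x^i$ because both endpoints do. Consequently the regular conditional distribution of $Y_i$ given $(Y^{i-1},X^n)$ obtained from this marginal is $\sigma(Y^{i-1},X^i)$-measurable, i.e.\ it is a kernel in ${\cal Q}({\cal Y}_i;{\cal Y}_{0,i-1}\times{\cal X}_{0,i})$, and the mixture indeed lies in $\overrightarrow{\cal Q}({\cal Y}_{0,n};{\cal X}_{0,n})$. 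Once you insert this line, your argument for (1) is complete.
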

 Let $BC({\cal Y}_{0,n})$ denotes the set of bounded continuous real-valued functions on ${\cal Y}_{0,n}$. We need the following.   
\begin{assumption}\label{conditions-existence}
The following conditions are assumed throughout the paper.\\
{\bf(A1)} ${\cal Y}_{0,n}$ is a compact Polish space, ${\cal X}_{0,n}$ is a Polish space;\\
{\bf(A2)} for all $h(\cdot){\in}BC({\cal Y}_{0,n})$, the function mapping $(x^{n},y^{n-1})\in{\cal X}_{0,n}\times{\cal Y}_{0,n-1}\mapsto\int_{{\cal Y}_n}h(y)P_{Y|Y^{n-1},X^n}(dy|y^{n-1},x^n)\in\mathbb{R}$ 
is continuous jointly in the variables $(x^{n},y^{n-1})\in{\cal X}_{0,n}\times{\cal Y}_{0,n-1}$;\\
{\bf(A3)} $d_{0,n}(x^n,\cdot)$ is continuous on ${\cal Y}_{0,n}$;\\
{\bf(A4)} the distortion level $D$ is such that there exist sequence $(x^n,y^{n})\in{\cal X}_{0,n}\times{\cal Y}_{0,n}$ satisfying $d_{0,n}(x^n,y^{n})<D$.
\end{assumption}
Note that since ${\cal Y}_{0,n}$ is assumed to be a compact Polish space, then by \cite{dupuis-ellis97} probability measures on ${\cal Y}_{0,n}$ are weakly compact. Moreover, the following weak compactness result can be obtained.
\begin{lemma}\label{compactness2}
Suppose Assumption~\ref{conditions-existence} {\bf(A1)}, {\bf(A2)} hold. Then\\
{\bf(1)} The set $\overrightarrow{\cal Q}({\cal Y}_{0,n};{\cal X}_{0,n})$ is weakly compact.\\
{\bf(2)} Under the additional conditions {\bf(A3)}, {\bf(A4)}  the set ${\overrightarrow{\cal Q}}_{0,n}(D)$ is a closed subset of $\overrightarrow{\cal Q}({\cal Y}_{0,n};{\cal X}_{0,n})$ (hence compact).
\end{lemma}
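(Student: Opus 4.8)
\noindent\emph{Proof strategy.} The plan is to exhibit $\overrightarrow{\cal Q}_{0,n}(D)$ as a weakly closed subset of a weakly compact ambient set. I fix the source law $P_{X^n}$ and use throughout the topology of weak convergence on $\overrightarrow{\cal Q}({\cal Y}_{0,n};{\cal X}_{0,n})$ induced by the embedding $\overrightarrow{P}_{Y^n|X^n}\mapsto P_{X^n}\otimes\overrightarrow{P}_{Y^n|X^n}\in{\cal M}_1({\cal X}_{0,n}\times{\cal Y}_{0,n})$; all such joint measures have ${\cal X}_{0,n}$-marginal $P_{X^n}$.

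\noindent\emph{Weak compactness (part (1)).} By {\bf(A1)}, ${\cal Y}_{0,n}$ is compact Polish, so ${\cal M}_1({\cal Y}_{0,n})$ is weakly compact. The joint measures above all share the tight marginal $P_{X^n}$ and have their ${\cal Y}$-component in the compact ${\cal Y}_{0,n}$; hence the family is tight and, by Prokhorov's theorem, relatively weakly compact. It then suffices to show the family is weakly closed, i.e.\ that a weak limit is again $P_{X^n}\otimes(\otimes_{i=0}^n P_{Y_i|Y^{i-1},X^i})$. The constraint ``${\cal X}_{0,n}$-marginal $=P_{X^n}$'' passes to the limit because the marginal projection is weakly continuous. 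For the nonanticipative (product) structure I would disintegrate a convergent net stagewise and use {\bf(A2)}: continuity of the per-stage reproduction kernels forces the maps $\int_{{\cal Y}_i}h(y_i)P_{Y_i|Y^{i-1},X^i}(dy_i|y^{i-1},x^i)$, $h\in BC({\cal Y}_i)$, to converge to functions still depending only on $(y^{i-1},x^i)$, so the limit kernel retains the factorization and the non-dependence on $X_{i+1}^n$. Equivalently, $\overrightarrow{\cal Q}({\cal Y}_{0,n};{\cal X}_{0,n})$ is the image of a product of weakly compact sets of continuous stagewise kernels under the convolution map $\otimes_{i=0}^n$, which is weakly continuous by {\bf(A2)}; a continuous image of a compact set is compact.

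\noindent\emph{The fidelity set is closed (part (2)).} For the functional
\begin{equation*}
\ell_{d_{0,n}}(\overrightarrow{P}_{Y^n|X^n})=\int d_{0,n}(x^n,y^n)\,\overrightarrow{P}_{Y^n|X^n}(dy^n|x^n)\otimes P_{X^n}(dx^n),
\end{equation*}
{\bf(A3)} and {\bf(A1)} give that $d_{0,n}(x^n,\cdot)$ is continuous on the compact ${\cal Y}_{0,n}$, hence bounded for each fixed $x^n$. Writing $d_{0,n}=\sup_m(d_{0,n}\wedge m)$ with each truncation continuous in $y^n$ and measurable in $x^n$, and using $d_{0,n}\geq0$ together with monotone convergence, $\ell_{d_{0,n}}$ is a supremum of weakly continuous functionals, hence weakly lower semicontinuous. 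Thus $\overrightarrow{\cal Q}_{0,n}(D)=\{\overrightarrow{P}_{Y^n|X^n}:\ell_{d_{0,n}}(\overrightarrow{P}_{Y^n|X^n})\leq D\}$ is weakly closed in $\overrightarrow{\cal Q}({\cal Y}_{0,n};{\cal X}_{0,n})$, and by part (1) it is weakly compact; {\bf(A4)} supplies nonemptiness, a measurable selection $x^n\mapsto y^n$ with $d_{0,n}(x^n,y^n)<D$ yielding an admissible degenerate kernel.

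\noindent\emph{Main obstacle.} The only genuinely delicate point is the weak closedness of the nonanticipative structure in part (1): the factorized/conditional-independence form of $\overrightarrow{P}_{Y^n|X^n}$ is not preserved under weak limits in general, so the argument must lean squarely on {\bf(A2)} --- either by controlling the stagewise disintegrations along the convergent net or by proving weak continuity of the $(n+1)$-fold convolution on the product of the stagewise kernel sets. The remaining ingredients (Prokhorov tightness, continuity of the marginal projection, lower semicontinuity of $\ell_{d_{0,n}}$, and ``a closed subset of a compact set is compact'') are routine.
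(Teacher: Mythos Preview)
The paper does not actually prove this lemma here; it defers the derivation to \cite{stavrou-charalambous2013c} and only remarks that ``the previous results follow from Prohorov's theorem that relates tightness and weak compactness.'' Your strategy is exactly in line with that remark: tightness from the fixed $P_{X^n}$-marginal plus compactness of ${\cal Y}_{0,n}$, Prokhorov for relative compactness, and lower semicontinuity of $\ell_{d_{0,n}}$ for closedness of the sublevel set. So at the level of approach there is nothing to contrast.

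One point worth sharpening. You correctly flag that the real work in part {\bf(1)} is weak closedness of the nonanticipative (factorized) structure, and you offer two sketches. The second sketch --- ``$\overrightarrow{\cal Q}({\cal Y}_{0,n};{\cal X}_{0,n})$ is the continuous image under $\otimes_{i=0}^n$ of a product of weakly compact sets of stagewise Feller kernels'' --- needs more care than you indicate: {\bf(A2)} gives continuity of \emph{each} stagewise kernel in its conditioning variables, but it does not by itself give compactness of the \emph{set} of such kernels in any obvious topology (there is no equicontinuity hypothesis, so an Arzel\`a--Ascoli argument is not immediately available). The first sketch --- pass to a weak limit of the joint laws and argue that the conditional independence $Y_i\perp X_{i+1}^n\,|\,(X^i,Y^{i-1})$ survives --- is the right route, but conditional independence is not in general preserved under weak convergence; this is precisely where {\bf(A2)} must be used to control the disintegrations along the net, and that step deserves to be written out rather than asserted. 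Your treatment of part {\bf(2)} (truncation, monotone convergence, lower semicontinuity of $\ell_{d_{0,n}}$, {\bf(A4)} for nonemptiness) is clean and complete.
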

\begin{proof}
The derivation is found in \cite{stavrou-charalambous2013c}.
\end{proof} 

The previous results follow from Prohorov's theorem that relates tightness and weak compactness. The next theorem establishes existence of the minimizing reproduction distribution for (\ref{ex12}); it follows from Lemma~\ref{compactness2} and the lower semicontinuity of $\mathbb{I}_{X^n\rightarrow{Y^n}}(P_{X^n},\cdot)$ with respect to $\overrightarrow{P}_{Y^n|X^n}$ \cite{stavrou-charalambous2013c}.
\begin{theorem}$(${\bf Existence}$)$\label{existence_rd}
Suppose the conditions of Lemma~\ref{compactness2} hold. Then ${R}^{na}_{0,n}(D)$ has a minimum.
\end{theorem}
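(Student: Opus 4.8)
The plan is to invoke the classical Weierstrass-type argument: a lower semicontinuous functional on a (sequentially) compact set attains its infimum. First I would note that under Assumption~\ref{conditions-existence} {\bf(A1)}--{\bf(A4)}, Lemma~\ref{compactness2} guarantees that the constraint set $\overrightarrow{\cal Q}_{0,n}(D)$ is a nonempty (by {\bf(A4)} and the assumed non-emptiness in Definition~\ref{def1}) weakly compact subset of $\overrightarrow{\cal Q}({\cal Y}_{0,n};{\cal X}_{0,n})$. Second, I would observe that the objective $\overrightarrow{P}_{Y^n|X^n}\mapsto {\mathbb I}_{X^n\rightarrow{Y^n}}({P}_{X^n},\overrightarrow{P}_{Y^n|X^n})$ is, for fixed $P_{X^n}$, lower semicontinuous with respect to the topology of weak convergence on $\overrightarrow{\cal Q}({\cal Y}_{0,n};{\cal X}_{0,n})$ — this is precisely the fact cited from \cite{stavrou-charalambous2013c}. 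The key point is that directed information is expressed in (\ref{re3}) as the relative entropy $\mathbb{D}(P_{X^n}\otimes\overrightarrow{P}_{Y^n|X^n}\|P_{X^n}\times P_{Y^n})$, and both the joint measure $P_{X^n}\otimes\overrightarrow{P}_{Y^n|X^n}$ and the product of marginals $P_{X^n}\times P_{Y^n}$ depend continuously (in the weak topology) on $\overrightarrow{P}_{Y^n|X^n}$, while relative entropy $\mathbb{D}(\cdot\|\cdot)$ is jointly lower semicontinuous under weak convergence on a Polish space (a standard result, e.g.\ in \cite{dupuis-ellis97}).

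With these two ingredients in hand, the conclusion is immediate: take a minimizing sequence $\{\overrightarrow{P}^{(k)}_{Y^n|X^n}\}_{k\geq1}\subset\overrightarrow{\cal Q}_{0,n}(D)$ with ${\mathbb I}_{X^n\rightarrow{Y^n}}({P}_{X^n},\overrightarrow{P}^{(k)}_{Y^n|X^n})\downarrow R^{na}_{0,n}(D)$. By weak compactness of $\overrightarrow{\cal Q}_{0,n}(D)$, extract a subsequence converging weakly to some $\overrightarrow{P}^{*}_{Y^n|X^n}$; since the set is (weakly) closed, $\overrightarrow{P}^{*}_{Y^n|X^n}\in\overrightarrow{\cal Q}_{0,n}(D)$, so it is feasible. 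Lower semicontinuity of the objective then gives ${\mathbb I}_{X^n\rightarrow{Y^n}}({P}_{X^n},\overrightarrow{P}^{*}_{Y^n|X^n})\leq\liminf_{k}{\mathbb I}_{X^n\rightarrow{Y^n}}({P}_{X^n},\overrightarrow{P}^{(k)}_{Y^n|X^n})=R^{na}_{0,n}(D)$, while feasibility forces the reverse inequality, so the infimum in (\ref{ex12}) is attained at $\overrightarrow{P}^{*}_{Y^n|X^n}$.

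I expect the substantive obstacle to lie not in the Weierstrass skeleton above but in the two facts imported as black boxes — namely (i) the weak compactness/closedness of $\overrightarrow{\cal Q}_{0,n}(D)$ in Lemma~\ref{compactness2}, which rests on Prohorov's theorem (tightness via compactness of ${\cal Y}_{0,n}$ from {\bf(A1)}) together with the joint-continuity condition {\bf(A2)} ensuring the convolution structure is preserved under weak limits and the continuity of $d_{0,n}(x^n,\cdot)$ from {\bf(A3)} ensuring $\ell_{d_{0,n}}$ is weakly continuous so that the sublevel set $\{\ell_{d_{0,n}}\leq D\}$ is closed; and (ii) the lower semicontinuity of directed information, where the delicate part is that the marginal $P_{Y^n}$ appearing in the denominator also varies with $\overrightarrow{P}_{Y^n|X^n}$, so one must argue joint lower semicontinuity of $(\mu,\nu)\mapsto\mathbb{D}(\mu\|\nu)$ rather than lower semicontinuity in the first argument alone. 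Since both (i) and (ii) are explicitly attributed to \cite{stavrou-charalambous2013c} and \cite{charalambous-stavrou2012}, the proof of the present theorem reduces to assembling them, and I would present it as the short compactness-plus-lower-semicontinuity argument given above.
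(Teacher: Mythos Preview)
Your proposal is correct and matches the paper's approach exactly: the paper states just before the theorem that existence ``follows from Lemma~\ref{compactness2} and the lower semicontinuity of $\mathbb{I}_{X^n\rightarrow{Y^n}}(P_{X^n},\cdot)$,'' and then defers the details to \cite{stavrou-charalambous2013c}, which is precisely the Weierstrass-type argument (compactness of the feasible set plus lower semicontinuity of the objective) you spell out. Your identification of the two black-box ingredients and where the real work lies is accurate.
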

\begin{proof}
The derivation is found in \cite{stavrou-charalambous2013c}.
\end{proof}


\section{Optimal Reproduction of Nonanticipative RDF}\label{necessary}

In this section the form of the optimal reproduction conditional distribution is derived under a stationarity assumption. We introduce the following main assumption.
\begin{assumption}$(${\bf Stationarity}$)$\label{stationarity}
The $(n+1)$-fold convolution conditional distribution $\overrightarrow{P}_{Y^n|X^n}(dy^n|x^n)=\otimes^n_{i=0}P_{Y_i|Y^{i-1},X^i}$ $(dy_i|y^{i-1},x^i)$, is the convolution of stationary conditional distributions.
\end{assumption}
\noi The consequence of Assumption~\ref{stationarity}, which holds for stationary processes  and a single letter distortion function, is that the Gateaux differential of $\mathbb{I}_{X^n\rightarrow{Y^n}}(P_{X^n},\overrightarrow{P}_{Y^n|X^n})$ is done in only one direction $\overrightarrow{P}_{Y^n|X^n}-\overrightarrow{P}_{Y^n|X^n}^0$ via $\overrightarrow{P}_{Y^n|X^n}^{\epsilon}\tri\overrightarrow{P}_{Y^n|X^n}+\epsilon\big{(}\overrightarrow{P}_{Y^n|X^n}-\overrightarrow{P}_{Y^n|X^n}^0\big{)}$, $\epsilon\in[0,1]$, since under Assumption~\ref{stationarity}, the functionals $\{P_{Y_i|Y^{i-1},X^i}(dy_i|y^{i-1},x^i)\in{\cal Q}({\cal Y}_i;{\cal Y}_{0,i-1}\times{\cal X}_{0,i}):~i=0,1,\ldots,n\}$ are identical.
\begin{theorem} \label{th5}
Suppose Assumption~\ref{stationarity} holds and~${\mathbb I}_{P_{X^n}}(\overrightarrow{P}_{Y^n|X^n}) \tri\mathbb{I}_{X^n\rightarrow{Y^n}}(P_{X^n},\overrightarrow{P}_{Y^n|X^n})$ is well defined for every $\overrightarrow{P}_{Y^n|X^n}\in \overrightarrow{\cal Q}_{0,n}(D)$ possibly taking values from the set $[0,\infty]$. Then  $\overrightarrow{P}_{Y^n|X^n} \rightarrow {\mathbb I}_{P_{X^n}}(\overrightarrow{P}_{Y^n|X^n})$ is Gateaux differentiable at every point in $\overrightarrow{\cal Q}_{0,n}(D)$, and the Gateaux derivative at the  point $\overrightarrow{P}_{Y^n|X^n}^0$ in the direction $\overrightarrow{P}_{Y^n|X^n}-\overrightarrow{P}_{Y^n|X^n}^0$ is given
by
\begin{align}
&\delta{\mathbb I}_{P_{X^n}}(\overrightarrow{P}_{Y^n|X^n}^0,\overrightarrow{P}_{Y^n|X^n}-\overrightarrow{P}_{Y^n|X^n}^0)\nonumber\\
&=\int\log \Bigg(\frac{\overrightarrow{P}_{Y^n|X^n}^0(dy^n|x^n)}{P_{Y^n}^0(dy^n)}\Bigg)\nonumber\\
&\otimes(\overrightarrow{P}_{Y^n|X^n}-\overrightarrow{P}_{Y^n|X^n}^0)(dy^n|x^n) P_{X^n}(dx^n)\nonumber
\end{align}
where $P_{Y^n}^0\in{\cal M}_1({\cal Y}_{0,n})$ is the marginal measure corresponding
to $\overrightarrow{P}_{Y^n|X^n}^0\otimes{P}_{X^n}\in{\cal M}_1({\cal Y}_{0,n}\times{\cal X}_{0,n})$.
\end{theorem}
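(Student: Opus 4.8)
The plan is to compute the Gateaux derivative directly from the definition
$$\delta{\mathbb I}_{P_{X^n}}(\overrightarrow{P}_{Y^n|X^n}^0,\overrightarrow{P}_{Y^n|X^n}-\overrightarrow{P}_{Y^n|X^n}^0)=\lim_{\epsilon\downarrow0}\frac{1}{\epsilon}\Big({\mathbb I}_{P_{X^n}}(\overrightarrow{P}_{Y^n|X^n}^{\epsilon})-{\mathbb I}_{P_{X^n}}(\overrightarrow{P}_{Y^n|X^n}^0)\Big),$$
where $\overrightarrow{P}_{Y^n|X^n}^{\epsilon}\tri\overrightarrow{P}_{Y^n|X^n}^0+\epsilon(\overrightarrow{P}_{Y^n|X^n}-\overrightarrow{P}_{Y^n|X^n}^0)$. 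First I would record that, because $\overrightarrow{\cal Q}_{0,n}(D)$ is convex (Theorem \ref{convexity_properties}, part (3)), the perturbed kernel $\overrightarrow{P}_{Y^n|X^n}^{\epsilon}$ stays in the feasible set for all $\epsilon\in[0,1]$, so the one-sided differential is the right object; Assumption \ref{stationarity} is what guarantees the perturbation need only be taken in this single affine direction, since the per-stage kernels are tied together. Then I would substitute the mixture $\overrightarrow{P}_{Y^n|X^n}^{\epsilon}$ into the Radon-Nikodym expression (\ref{re3}) for ${\mathbb I}_{X^n\rightarrow Y^n}$, noting that the induced $Y^n$-marginal is the corresponding mixture $P_{Y^n}^{\epsilon}=P_{Y^n}^0+\epsilon(P_{Y^n}-P_{Y^n}^0)$, and write the integrand as $g(\epsilon)\log\frac{g(\epsilon)}{f(\epsilon)}$ type terms where $g,f$ are affine in $\epsilon$.

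The core of the argument is then the differentiation under the integral sign. I would form the difference quotient $\frac{1}{\epsilon}\big(\phi(\epsilon)-\phi(0)\big)$ of the integrand, show its pointwise limit equals $\log\big(\frac{d\overrightarrow{P}_{Y^n|X^n}^0}{dP_{Y^n}^0}\big)$ integrated against $(\overrightarrow{P}_{Y^n|X^n}-\overrightarrow{P}_{Y^n|X^n}^0)\otimes P_{X^n}$ (the $\epsilon$-linear terms coming from perturbing the numerator measure; the terms from perturbing $P_{Y^n}^{\epsilon}$ in the denominator cancel against a compensating term upon integrating, because $\int(\overrightarrow{P}_{Y^n|X^n}-\overrightarrow{P}_{Y^n|X^n}^0)(dy^n|x^n)=0$), and then invoke dominated convergence to pass the limit inside. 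The convexity of $t\mapsto t\log t$ gives monotonicity of the difference quotients in $\epsilon$, which is the standard device for producing an integrable dominating function and for justifying that the limit exists in $[-\infty,\infty]$ even when ${\mathbb I}_{P_{X^n}}$ takes the value $+\infty$; here the hypothesis that ${\mathbb I}_{P_{X^n}}(\overrightarrow{P}_{Y^n|X^n})$ is well defined on all of $\overrightarrow{\cal Q}_{0,n}(D)$ is used to keep the endpoint values finite enough for the monotone/dominated passage.

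The main obstacle I anticipate is precisely this interchange of limit and integral in the regime where the functional may be infinite: one must argue carefully that the difference quotient is dominated, uniformly in $\epsilon$, by an integrable function built from $\log\frac{d\overrightarrow{P}_{Y^n|X^n}^0}{dP_{Y^n}^0}$ and from the Radon-Nikodym derivative $\frac{d\overrightarrow{P}_{Y^n|X^n}}{d\overrightarrow{P}_{Y^n|X^n}^0}$ (or handle the singular part separately), and that the potentially divergent pieces either cancel or have a definite sign. A secondary technical point is checking absolute continuity $\overrightarrow{P}_{Y^n|X^n}^0\ll P_{Y^n}^0$ and $\overrightarrow{P}_{Y^n|X^n}\ll\overrightarrow{P}_{Y^n|X^n}^0$ on the relevant event so that the logarithms in the final formula make sense $P_{X^n}$-a.s.; outside that event the integrand contributes $+\infty$ consistently on both sides. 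Once these measure-theoretic points are settled, collecting terms yields exactly the displayed expression, with the $P_{Y^n}^0$ in the denominator arising as the $Y^n$-marginal of $\overrightarrow{P}_{Y^n|X^n}^0\otimes P_{X^n}$ as asserted.
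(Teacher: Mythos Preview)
Your proposal is correct and follows the standard route of directly computing the Gateaux derivative of the relative-entropy functional via the convex perturbation $\overrightarrow{P}_{Y^n|X^n}^{\epsilon}$, differentiating under the integral using the monotonicity of difference quotients furnished by the convexity of $t\mapsto t\log t$. The paper itself does not give any details beyond pointing to \cite{farzad06} and remarking that the argument there carries over (``although it is more involved''), so your outline is in fact more explicit than, but entirely consistent with, what the paper records.
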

\begin{proof}
The proof is similar to the one in \cite{farzad06} (although it is more involved).
\end{proof}

\noi The constrained problem defined by (\ref{ex12}) can be reformulated as an unconstrained problem using Lagrange multipliers \cite{stavrou-charalambous2013c} 
\begin{align}
&{R}_{0,n}^{na}(D)= \sup_{s\leq{0}}\inf_{\substack{{\overrightarrow{P}_{Y^n|X^n}}\\
\in \overrightarrow{Q}({\cal Y}_{0,n};{\cal X}_{0,n})}} \Big\{{\mathbb I}_{X^n\rightarrow{Y^n}}(P_{X^n},\overrightarrow{P}_{Y^n|X^n})\nonumber\\
&-s(\ell_{{d}_{0,n}}\big(\overrightarrow{P}_{Y^n|X^n})-D(n+1)\big)\Big\},~s\in(-\infty,0]. \label{ex13}
\end{align}
\noi The above observations yield the following theorem.
\begin{theorem}$(${\bf Optimal Reproduction Distribution}$)$ \label{th6}
Suppose the Assumption~\ref{stationarity} holds and consider $d_{0,n}(x^n,y^n)\tri\sum_{i=0}^n\rho(T^i{x^n},T^i{y^n})$. Then\\
{\bf(1)} The infimum in $(\ref{ex13})$ is attained at  $\overrightarrow{P}^*_{Y^n|X^n} \in\overrightarrow{\cal Q}_{0,n}(D)$ given by\footnote[5]{Due to stationarity assumption $P_{Y_i|Y^{i-1}}(\cdot|\cdot)=P(\cdot|\cdot)$ and ${P}^*_{Y_i|Y^{i-1},X^i}(\cdot|\cdot,\cdot)={P}^*(\cdot|\cdot,\cdot)$}
\begin{align}
&\overrightarrow{P}^*_{Y^n|X^n}(dy^n|x^n)=\otimes_{i=0}^n{P}^*_{Y_i|Y^{i-1},X^i}(dy_i|y^{i-1},x^i)\nonumber\\
&=\otimes_{i=0}^n\frac{e^{s \rho(T^i{x^n},T^i{y^n})}P^*_{Y_i|Y^{i-1}}(dy_i|y^{i-1})}{\int_{{\cal Y}_i} e^{s \rho(T^i{x^n},T^i{y^n})} P^*_{Y_i|Y^{i-1}}(dy_i|y^{i-1})}\label{ex14}
\end{align}
where $s\leq{0}$ and $P^*_{Y_i|Y^{i-1}}(dy_i|y^{i-1})\in {\cal Q}({\cal Y}_i;{\cal Y}_{0,{i-1}})$.\\ 
{\bf(2)} The nonanticipative RDF is given by
\begin{align}
&{R}_{0,n}^{na}(D)=sD(n+1) -\sum_{i=0}^n\int\log \Big( \int_{{\cal Y}_i} e^{s\rho(T^i{x^n},T^i{y^n})} \nonumber\\
&P^*_{Y_i|Y^{i-1}}(dy_i|y^{i-1})\Big){\overrightarrow{P}^*_{Y^{i-1}|X^{i-1}}(dy^{i-1}|x^{i-1})\otimes{P}_{X^i}(dx^i)}.\nonumber
\end{align}
If ${R}_{0,n}^{na}(D) > 0$ then $ s < 0$  and
\begin{align}
\sum_{i=0}^n\int\rho(T^i{x^n},T^i{y^n})\overrightarrow{P}^*_{Y^{i}|X^{i}}(dy^i|x^i)P_{X^i}(dx^i)=(n+1)D.\nonumber
\end{align}
\end{theorem}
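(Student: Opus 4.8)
The plan is to treat the unconstrained Lagrangian formulation \eqref{ex13} and characterize its minimizer via the Gateaux derivative obtained in Theorem~\ref{th5}. First I would fix $s\le 0$ and define the functional
\[
\mathbb{J}_{s}(\overrightarrow{P}_{Y^n|X^n})\tri \mathbb{I}_{X^n\rightarrow Y^n}(P_{X^n},\overrightarrow{P}_{Y^n|X^n})-s\big(\ell_{d_{0,n}}(\overrightarrow{P}_{Y^n|X^n})-D(n+1)\big).
\]
Since $\overrightarrow{P}_{Y^n|X^n}\mapsto \mathbb{I}_{X^n\rightarrow Y^n}(P_{X^n},\cdot)$ is convex (Theorem~\ref{convexity_properties}(2)) and $\ell_{d_{0,n}}$ is affine in $\overrightarrow{P}_{Y^n|X^n}$, the functional $\mathbb{J}_s$ is convex on the convex set $\overrightarrow{\cal Q}({\cal Y}_{0,n};{\cal X}_{0,n})$ (Theorem~\ref{convexity_properties}(1)). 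Hence a point $\overrightarrow{P}^{*}_{Y^n|X^n}$ is a global minimizer if and only if the Gateaux derivative satisfies the variational inequality
\[
\delta\mathbb{J}_{s}(\overrightarrow{P}^{*}_{Y^n|X^n},\overrightarrow{P}_{Y^n|X^n}-\overrightarrow{P}^{*}_{Y^n|X^n})\ge 0
\quad\text{for all }\overrightarrow{P}_{Y^n|X^n}\in\overrightarrow{\cal Q}({\cal Y}_{0,n};{\cal X}_{0,n}).
\]
Using Theorem~\ref{th5} for the directed-information part and the affine variation of $\ell_{d_{0,n}}$, this inequality becomes
\[
\int\Big[\log\frac{\overrightarrow{P}^{*}_{Y^n|X^n}(dy^n|x^n)}{P^{*}_{Y^n}(dy^n)}-s\,d_{0,n}(x^n,y^n)\Big]\otimes(\overrightarrow{P}_{Y^n|X^n}-\overrightarrow{P}^{*}_{Y^n|X^n})(dy^n|x^n)\,P_{X^n}(dx^n)\ge 0 .
\]

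Next I would argue that under the stationarity Assumption~\ref{stationarity} and the additive structure $d_{0,n}(x^n,y^n)=\sum_{i=0}^n\rho(T^ix^n,T^iy^n)$, the variational inequality decouples across the time index $i$: the convolution structure of both $\overrightarrow{P}_{Y^n|X^n}$ and the chain-rule expansion of the log-likelihood ratio,
\[
\log\frac{\overrightarrow{P}_{Y^n|X^n}(dy^n|x^n)}{P_{Y^n}(dy^n)}=\sum_{i=0}^n\log\frac{P_{Y_i|Y^{i-1},X^i}(dy_i|y^{i-1},x^i)}{P_{Y_i|Y^{i-1}}(dy_i|y^{i-1})},
\]
reduces the optimality condition to a per-stage stationary condition: for each $i$ and $P_{X^i,Y^{i-1}}$-a.e.\ $(x^i,y^{i-1})$ the candidate kernel must minimize a KL-divergence-plus-linear-cost over $P_{Y_i|Y^{i-1},X^i}(\cdot|y^{i-1},x^i)\in{\cal M}_1({\cal Y}_i)$. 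Solving this standard tilted-measure problem (Gibbs/exponential change of measure) gives exactly \eqref{ex14},
\[
P^{*}_{Y_i|Y^{i-1},X^i}(dy_i|y^{i-1},x^i)=\frac{e^{s\rho(T^ix^n,T^iy^n)}P^{*}_{Y_i|Y^{i-1}}(dy_i|y^{i-1})}{\int_{{\cal Y}_i}e^{s\rho(T^ix^n,T^iy^n)}P^{*}_{Y_i|Y^{i-1}}(dy_i|y^{i-1})},
\]
where $P^{*}_{Y_i|Y^{i-1}}$ is the marginal induced by $\overrightarrow{P}^{*}_{Y^{i-1}|X^{i-1}}\otimes P_{X^i}$; the normalization integral is finite and positive because ${\cal Y}_i$ is compact, $\rho$ is continuous, and $s\le 0$. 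Membership $\overrightarrow{P}^{*}_{Y^n|X^n}\in\overrightarrow{\cal Q}_{0,n}(D)$ is then enforced by the choice of $s$ through the outer supremum in \eqref{ex13}. For part~(2), I would substitute \eqref{ex14} back into $\mathbb{I}_{X^n\rightarrow Y^n}$: the log-likelihood ratio telescopes into $\sum_i\big(s\rho(T^ix^n,T^iy^n)-\log\int_{{\cal Y}_i}e^{s\rho}P^{*}_{Y_i|Y^{i-1}}\big)$, and integrating against $\overrightarrow{P}^{*}_{Y^n|X^n}\otimes P_{X^n}$ gives $s\,\ell_{d_{0,n}}(\overrightarrow{P}^{*})$ minus the sum of log-partition terms; at optimality $\ell_{d_{0,n}}(\overrightarrow{P}^{*})=D(n+1)$ when $R^{na}_{0,n}(D)>0$, yielding the displayed formula, and the active-constraint identity follows from complementary slackness $s(\ell_{d_{0,n}}(\overrightarrow{P}^{*})-D(n+1))=0$ together with $s<0$.

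The main obstacle I anticipate is rigorously justifying the decoupling step: the marginal $P^{*}_{Y_i|Y^{i-1}}$ appearing in \eqref{ex14} depends on the earlier kernels $P^{*}_{Y_j|Y^{j-1},X^j}$, $j<i$, so the per-stage problems are not independent but recursively coupled, and one must verify that the forward convolution construction is well defined (each stage's tilted kernel is measurable in $(y^{i-1},x^i)$, which is where Assumption~\ref{conditions-existence}(A2) and continuity of $\rho$ enter) and that the resulting $\overrightarrow{P}^{*}_{Y^n|X^n}$ genuinely lies in the admissible class $\overrightarrow{\cal Q}({\cal Y}_{0,n};{\cal X}_{0,n})$. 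A secondary technical point is handling the possibility $\mathbb{I}_{P_{X^n}}(\overrightarrow{P}_{Y^n|X^n})=+\infty$ for some competitors, so that the variational inequality must be read in the extended-real sense; convexity plus the finiteness of $\mathbb{J}_s$ at the candidate $\overrightarrow{P}^{*}$ (guaranteed by compactness of ${\cal Y}_{0,n}$ and boundedness of $\rho$ thereon) lets one restrict attention to the effective domain without loss. Existence of the saddle point $(s^{*},\overrightarrow{P}^{*})$ in \eqref{ex13} is supplied by Theorem~\ref{existence_rd} together with standard Lagrangian duality for convex programs with an affine constraint satisfying the Slater-type condition Assumption~\ref{conditions-existence}(A4).
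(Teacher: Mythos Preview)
The paper does not actually supply a self-contained proof of Theorem~\ref{th6}; it simply writes ``The derivation is found in \cite{stavrou-charalambous2013c}.'' Consequently there is no in-paper argument to compare against line by line. That said, your proposal follows precisely the route the surrounding text sets up: it invokes the convexity of $\mathbb{I}_{X^n\rightarrow Y^n}(P_{X^n},\cdot)$ from Theorem~\ref{convexity_properties}, uses the Gateaux derivative computed in Theorem~\ref{th5} to write the first-order variational inequality for the Lagrangian \eqref{ex13}, and then solves the resulting tilted-measure (Gibbs) problem stagewise, with complementary slackness delivering the active-constraint identity in part~(2). This is the standard and expected argument, and is almost certainly what the cited reference contains.

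One point worth tightening: your decoupling paragraph treats the per-stage problems as if they were separable once the chain rule is applied, but the crucial role of Assumption~\ref{stationarity} is stronger than you state. As the paper emphasizes just before Theorem~\ref{th5}, stationarity forces all the kernels $P_{Y_i|Y^{i-1},X^i}$ to coincide, so the Gateaux variation is taken in a \emph{single} direction $\overrightarrow{P}_{Y^n|X^n}-\overrightarrow{P}^{0}_{Y^n|X^n}$ rather than independently in each factor; this is what makes the first-order condition collapse to one tilted-kernel identity rather than $n+1$ coupled ones. Your sketch would benefit from making explicit that the ``per-stage'' condition is really a single stationary condition repeated, which also explains the footnote in the theorem statement. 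The recursive dependence of $P^{*}_{Y_i|Y^{i-1}}$ on earlier kernels that you flag as an obstacle is then resolved as a fixed-point/self-consistency requirement for the common stationary kernel, not as a forward recursion to be constructed stage by stage.
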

\begin{proof}
The derivation is found in \cite{stavrou-charalambous2013c}.
\end{proof}
\begin{remark}
Note that if the distortion function satisfies $\rho(T^i{x^n},T^i{y^n})=\rho(x_i,T^i{y^n})$ then for $i=0,1,\ldots,n$
\begin{equation}
{P}^*_{Y_i|Y^{i-1},X^i}(dy_i|y^{i-1},x^i)={P}^*_{Y_i|Y^{i-1},X^i}(dy_i|y^{i-1},x_i)\nonumber
\end{equation}
that is, the reproduction kernel is Markov in $X^n$. 
\end{remark}


\section{Realization of Nonanticipative RDF}\label{realization1}

The realization of the nonanticipative RDF (optimal reproduction conditional distribution) is equivalent to the sensor mapping as shown in Fig.~\ref{filtering_and_causal} which produces the auxiliary random process $\{Z_i:~i\in\mathbb{N}\}$ which is used for filtering. This is equivalent to identifying a communication channel, an encoder and a decoder such that the reproduction from the sequence $X^n$ to the sequence $Y^n$ matches the nonanticipative rate distortion minimizing reproduction kernel. Fig.~\ref{realization_figure} illustrates the cascade  sub-systems that realize the nonanticipative RDF. 
\begin{definition}$(${\bf Realization}$)$\label{realization}
Given a source $\{P_{X_i|X^{i-1}}(dx_i|x^{i-1}):i=0,\ldots,n\}$,  a channel $\{P_{B_i|B^{i-1},A^{i}}(db_i|b^{i-1},a^i):i=0,\ldots,n\}$ is a realization of the optimal reproduction  distribution (\ref{ex14}) if there exists a pre-channel encoder $\{P_{A_i|A^{i-1},B^{i-1},X^i}$ $(da_i|a^{i-1},b^{i-1},x^i):i=0,\ldots,n\}$ and a post-channel decoder $\{P_{Y_i|Y^{i-1},B^i}(dy_i|y^{i-1},b^i):i=0,\ldots,n\}$ such that
\begin{align}
 {\overrightarrow {P}}_{Y^{n}|X^{n}}^*(dy^n|x^n)&\tri\otimes_{i=0}^n P^*_{Y_i|Y^{i-1},X^i}(dy_i|y^{i-1},x^i)\nonumber\\
&=\otimes_{i=0}^n P_{Y_i|Y^{i-1},X^i}(dy_i|y^{i-1},x^i)\label{equation9}
\end{align}
where (\ref{equation9}) is generated from the joint distribution
\begin{align}
&P_{X^n, A^n, B^n, Y^n}(dx^n,da^n,db^n,dy^n) \nonumber \\
&=\otimes_{i=0}^n P_{Y_i|Y^{i-1},B^i}(dy_i|y^{i-1},b^i)\nonumber\\
&\quad\otimes P_{B_i|B^{i-1},A^{i}}
(db_i|b^{i-1},a^i)\nonumber\\
&\quad\otimes P_{A_i|A^{i-1},B^{i-1},X^i}(da_i|a^{i-1},b^{i-1},x^i)\nonumber\\
&\quad\otimes P_{X_i|X^{i-1}}(dx_i|x^{i-1}).\nonumber
\end{align}
The filter is given by $\{P_{X_i|B^{i-1}}(dx_i|b^{i-1}):i=0,\ldots,n\}$ or by $\{P_{X_i|Y^{i-1}}(dx_i|y^{i-1}):i=0,\ldots,n\}$.
\end{definition}
\begin{figure}[ht]
\centering
\includegraphics[scale=0.5]{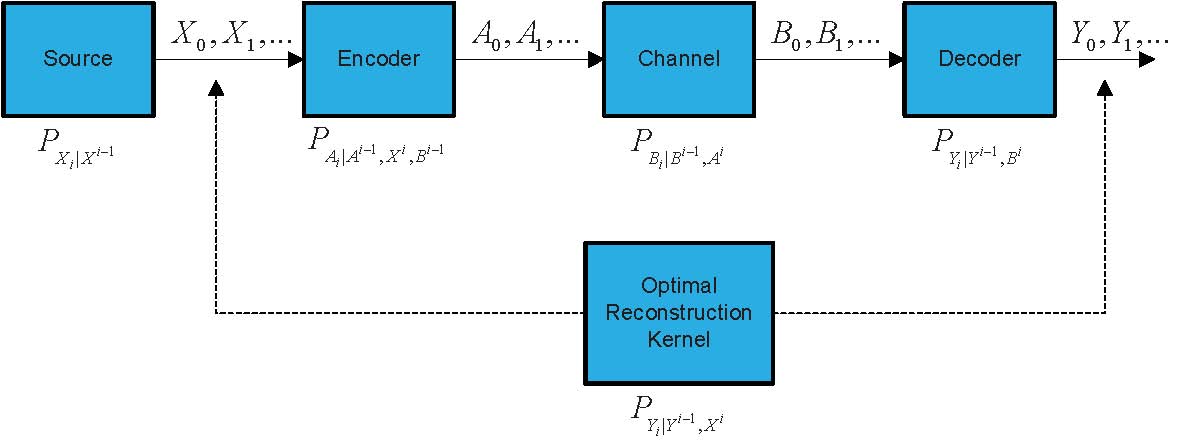}
\caption{Realizable nonanticipative rate distortion function.}
\label{realization_figure}
\end{figure}
\noi Clearly, $\{B_i:~i=0,1,\ldots,n\}$ is an auxiliary random process which is needed to obtain the filter $\{P_{X_i|B^{i-1}}(dx_i|b^{i-1}):i=0,\ldots,n\}$. If we further ensure that there exists $(D,P)$ such that $R^{na}(D)\tri\lim_{n\rightarrow\infty}\frac{1}{n+1}R^{na}_{0,n}(D)=C(P)$, where $C(P)$ is the capacity of the channel with power level $P$, then the realization of Fig.~\ref{realization_figure} is equivalent to symbol-by-symbol transmission in which the source is matched to the channel, e.g., real-time transmission of information.

\section{Example}\label{example}

\par Consider the following discrete-time partially observed linear Gauss-Markov system described by
\begin{eqnarray}
\left\{ \begin{array}{ll} X_{t+1}=AX_t+BW_t,~X_0=X\in\mathbb{R}^n,~t\in\mathbb{N}\\
Y_t=CX_t+GV_t,~t\in\mathbb{N} \end{array} \right.\label{equation51}
\end{eqnarray}
\noi where $X_t\in\mathbb{R}^m$ is the state (unobserved) process of information source (plant), and $Y_t\in\mathbb{R}^p$ is the partially measurement (observed) process. Assume that ($C,A$) is detectable and ($A,\sqrt{BB^{tr}}$) is stabilizable, ($G\neq0$). The state and observation noises $\{(W_t,V_t):t\in\mathbb{N}\}$, $W_t\in\mathbb{R}^k$ and $V_t\in\mathbb{R}^p$, are Gaussian IID processes with zero mean and identity covariances are mutually independent, and independent of the Gaussian RV $X_0$, with parameters $N(\bar{x}_0,\bar{V}_0)$.\\
\noi The objective is to reconstruct $\{Y_t:~t\in\mathbb{N}\}$ from $\{\tilde{Y}_t:~t\in\mathbb{N}\}$ using single letter distortion. 
First, we compute 
\begin{eqnarray*}
R_{0,n}^{na}(D)=\inf_{\overrightarrow{P}_{\tilde{Y}^n|Y^n}\in\overrightarrow{\cal Q}_{0,n}(D)}\frac{1}{n+1}\mathbb{I}_{X^n\rightarrow{Y^n}}(P_{Y^n},\overrightarrow{P}_{\tilde{Y}^n|Y^n})
\end{eqnarray*}
and then realize the optimal reproduction distribution. According to Theorem~\ref{th6}, the optimal reproduction is given by 
\begin{align}
\overrightarrow{P}^*_{\tilde{Y}^n|Y^n}(d\tilde{y}^n|y^n)=\otimes_{t=0}^n\frac{e^{s||\tilde{y}_t-y_t||^2}P_{\tilde{Y}_t|\tilde{Y}^{t-1}}(d\tilde{y}_t|\tilde{y}^{t-1})}{\int_{{\cal Y}_t}e^{s||\tilde{y}_t-y_t||^2}P_{\tilde{Y}_t|\tilde{Y}^{t-1}}(d\tilde{y}_t|\tilde{y}^{t-1})}\label{eq.9}
\end{align}
where $s\leq{0}$.
Hence, from (\ref{eq.9}) it follows that $P_{\tilde{Y}_t|\tilde{Y}^{t-1},Y^t}=P_{\tilde{Y}_t|\tilde{Y}^{t-1},Y_t}(d\tilde{y}_t|\tilde{y}^{t-1},y_t)-a.a.$, that is, the reproduction is Markov with respect to the process $\{Y_t:~t\in\mathbb{N}\}$, and $\{(X_t,{Y}_t):~t\in\mathbb{N}\}$ is jointly Gaussian, hence it follows that $P_{\tilde{Y}_t|\tilde{Y}^{t-1},Y_t}(\cdot|\tilde{y}^{t-1},y_t)$ is Gaussian. Hence, it has the general form
\begin{eqnarray}
\tilde{Y}_t=\bar{A}Y_t+\bar{B}\tilde{Y}^{t-1}+\bar{Z}_t,~t\in\mathbb{N}\label{eq.10}
\end{eqnarray}
where $\bar{A}_t\in\mathbb{R}^{p\times{p}}$, $\bar{B}_t\in\mathbb{R}^{p\times{t}p}$, and $\{\bar{Z}_t:~t\in\mathbb{N}\}$ is an independent sequence of Gaussian vectors.\\
The nonanticipative RDF is given by \cite{stavrou-charalambous2013c} 
\begin{align}
R_{0,n}^{na}(D)=\frac{1}{n+1}\sum_{t=0}^n\sum_{i=1}^p\log\Big{(}\frac{\lambda_{t,i}}{\delta_{t,i}}\Big{)}\label{equation11}
\end{align}
where $\{\xi_t:~t\in\mathbb{N}\}$ are such that 
\begin{eqnarray}
\delta_{t,i} \tri \left\{ \begin{array}{ll} \xi_t & \mbox{if} \quad \xi_t\leq\lambda_{t,i} \\
\lambda_{t,i} &  \mbox{if}\quad\xi_t>\lambda_{t,i} \end{array} \right.,~t\in\mathbb{N},~i=1,\ldots,p\nonumber
\end{eqnarray}
and $\{\xi_t:~t\in\mathbb{N}\}$ satisfies $\sum_{i=1}^p\delta_{t,i}=D$. Define $\Delta_t\tri{diag}\{\delta_{t,i},\ldots,\delta_{t,p}\}$.\\
We realize (\ref{eq.10}) and (\ref{equation11}) via a scalar additive Gaussian noise (AGN) channel with feedback defined by 
\begin{eqnarray}
B_t=A_t+Z_t,~Var(Z_t)=Q,~t\in\mathbb{N}\label{eq.11}
\end{eqnarray}
where the encoder is a mapping $A_t=\Phi_t(Y_t,\tilde{Y}^{t-1})$ with power $P_t\tri{E}\{(A_t)^2\}$. Hence, the capacity of (\ref{eq.11}) is $C(P)\tri\lim_{n\rightarrow\infty}\frac{1}{n+1}I(A^n\rightarrow{B}^n)=\lim_{n\rightarrow\infty}\frac{1}{2}\frac{1}{n+1}\sum_{t=0}^n\log\big(1+E\{(A_t)^2\}Var(Z_t)^{-1}\big)=\frac{1}{2}\log(1+\frac{P}{Q})$. \\
{\it Realization of the nonanticipative RDF.} The realization is based on the block diagram of Fig.~\ref{discrete_time_communication_system}. The encoder $\Phi_t(\cdot,\cdot)$ consists of a pre-encoder which produces the Gaussian innovation process $\{K_t:~t\in\mathbb{N}\}$, defined by 
\begin{eqnarray}
K_t\tri{Y}_t-E\Big{\{}Y_t|\sigma\{\tilde{Y}^{t-1}\}\Big{\}},~t\in\mathbb{N}\label{equation52}
\end{eqnarray}
whose covariance is defined by $\Lambda_t\tri{E}\{K_tK_t^{tr}\}$. The decoder consists of a pre-decoder $\{\tilde{K}_t:~t\in\mathbb{N}\}$ which is defined by 
\begin{eqnarray}
\tilde{K}_t\tri\tilde{Y}_t-E\Big{\{}Y_t|\sigma\{\tilde{Y}^{t-1}\}\Big{\}},~t\in\mathbb{N}.\label{eq.12}
\end{eqnarray}
\noi Let $\{E_t:~t\in\mathbb{N}\}$ be the unitary matrix such that
\begin{eqnarray}
E_t\Lambda_t{E}_t^{tr}=diag\{\lambda_{t,1},\ldots\lambda_{t,p}\},~t\in\mathbb{N}.\label{equation53}
\end{eqnarray}
\noi Define $\Gamma_t\tri{E}_tK_t$ and let $\{\tilde{\Gamma}_t:~t\in\mathbb{N}^{n}\}$ denote its reproduction.\\ 
Thus, the pre-encoder can be further scalled by $\Gamma_t=E_tK_t$, and $\Gamma_t$ is compressed by $A_t={\cal A}_t\Gamma_t$ and sent through the AGN channel with feedback, after which  the received signal is decompressed by $\tilde{\Gamma}_t={\cal B}_tB_t$ in the pre-decoder. By the knowledge of the channel output at the decoder, the mean square estimator $\hat{X}_t$ is generated at the decoder (and encoder because $\hat{X}_t\tri{E}\big{\{}X_t|\sigma\{\tilde{Y}_{t-1}\}\big{\}}$). The complete design is illustrated in Fig.~\ref{discrete_time_communication_system}.
\begin{figure}[ht]
\centering
\includegraphics[scale=0.5]{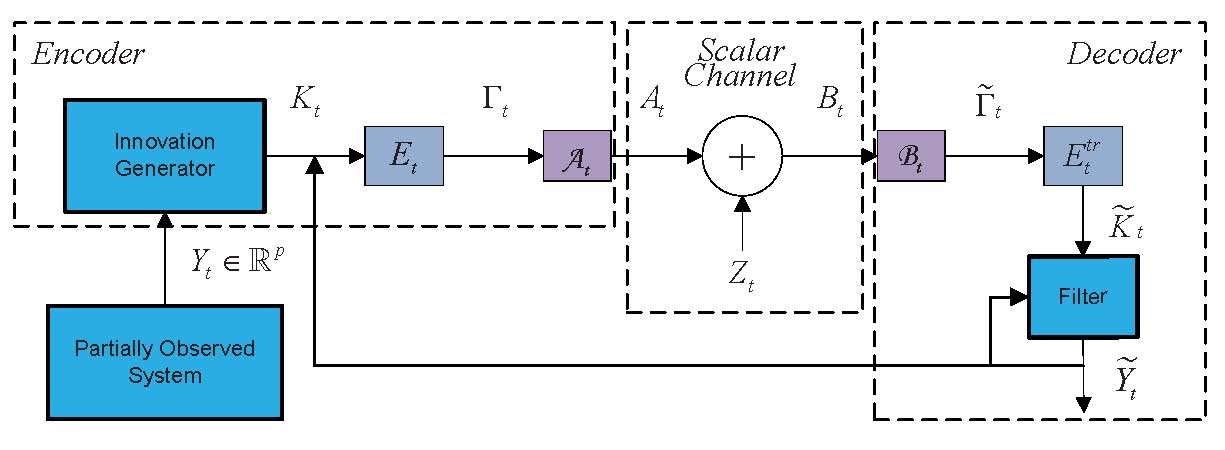}
\caption{Design of the discrete-time communication system with scalar additive Gaussian noise (AGN) channel.}
\label{discrete_time_communication_system}
\end{figure}
\noi We can design $\{({\cal A}_t,{\cal B}_t):~t\in\mathbb{N}\}$ by
\begin{align}
&{\cal A}_t=\Big{[}\sqrt{\frac{\alpha_1{P}_t}{\lambda_{t,1}}},\ldots,\sqrt{\frac{\alpha_p{P}_t}{\lambda_{t,p}}}\Big{]},~t\in\mathbb{N}\nonumber\\
&{\cal B}_t=\Big{[}\sqrt{\alpha_1{P}_t\lambda_{t,1}},\ldots,\sqrt{\alpha_p{P}_t\lambda_{t,p}}\Big{]}^{tr},~t\in\mathbb{N}\nonumber
\end{align}
where $\sum_{i=1}^p{\alpha}_i=1$,~$i=1,\ldots,p$. Note that $H_t\tri{\cal B}_t{\cal A}_t$.\\
{\it Decoder.} From Fig.~\ref{discrete_time_communication_system},
\begin{align*}
\tilde{K}_t&=E_t^{tr}\tilde{\Gamma}_t=E_t^{tr}H_tE_tK_t+E_t^{tr}{\cal B}_tZ_t,~t\in\mathbb{N}.
\end{align*}
The reproduction of $Y_t$ is given by the sum of $\tilde{K}_t$ and $C\hat{X}_t$ as follows.
\begin{align}
\tilde{Y}_t&=E_t^{tr}H_tE_tK_t+E_t^{tr}{\cal B}_tZ_t+C\hat{X}_t,~t\in\mathbb{N}.\nonumber\\
&=E_t^{tr}H_tE_tC(X_t-\hat{X}_t)+C\hat{X}_t\nonumber\\
&\qquad+(E_t^{tr}H_tE_tGV_t+E_t^{tr}{\cal B}_tZ_t)\nonumber
\end{align}
where $\{V_t:~t\in\mathbb{N}\}$ and $\{Z_t:~t\in\mathbb{N}\}$ are independent Gaussian vectors.
\noi The desired distortion is achieved as follows
\begin{align}
&E\Big{\{}(Y_t-\tilde{Y}_t)^{tr}(Y_t-\tilde{Y}_t)\Big{\}}\nonumber\\
&=Tr\Big{\{}E_t^{tr}\Big{(}(I-H_t)diag(\lambda_{t,1},\ldots,\lambda_{t,p})(1-H_t)^{tr}\nonumber\\
&+({\cal B}_tQ{\cal B}_t^{tr})\Big{)}E_t\Big{\}}=\sum_{i=0}^p\delta_{t,i}=D.\label{equation55}
\end{align}
Thus, from (\ref{equation55}), $\{\delta_{t,i}\}_{i=1}^p$ are eigenvalues of the matrix 
\begin{align*}
T_t\tri(I-H_t)diag(\lambda_{t,1},\ldots,\lambda_{t,p})(1-H_t)^{tr}+({\cal B}_tQ{\cal B}_t^{tr})
\end{align*} 
and we can calculate $\{a_i\}_{i=1}^p$
and $P_t$ in terms of $\{\lambda_{t,i},\delta_{t,i}\}_{i=1}^p$ and $Q$.\\
The decoder is $\tilde{Y}_t=\tilde{K}_t+C\hat{X}_t$, where $\{\hat{X}_t:~t\in\mathbb{N}\}$ is obtained from the modified Kalman filter as follows. 
\begin{align}
\hat{X}_{t+1}&=A\hat{X}_t+A\Sigma_t(E_t^{tr}H_tE_tC)^{tr}M_t^{-1}(\tilde{Y}_t-C\hat{X}_t),\hat{X}_0=\bar{x}_0\nonumber\\
\Sigma_{t+1}&=A\Sigma_tA^{tr}-A\Sigma_t(E_t^{tr}H_tE_tC)^{tr}M_t^{-1}(E_t^{tr}H_tE_tC)\Sigma_tA\nonumber\\
&+BB_t^{tr},~\Sigma_0=\bar{\Sigma}_0\nonumber
\end{align}
where
\begin{align}
M_t&=E_t^{tr}H_tE_tC\Sigma_t(E_t^{tr}H_tE_tC)^{tr}\nonumber\\
&+E_t^{tr}H_tE_tGG^{tr}(E_t^{tr}H_tE_t)^{tr}+E_t^{tr}{\cal B}_tQ{\cal B}_t^{tr}E_t.\nonumber
\end{align}
\noi{\it Infinite Horizon.} As $t\rightarrow\infty$, under the assumption that the linear system is stabilizable and detectable, we have
\begin{align}
\Sigma_{\infty}&=A\Sigma_\infty{A}^{tr}\nonumber\\
&-A\Sigma_{\infty}(E_\infty^{tr}H_\infty{E}_{\infty}C)^{tr}M_{\infty}^{-1}(E_{\infty}^{tr}H_{\infty}E_{\infty}C)\Sigma_{\infty}A\nonumber\\
&+BB_{\infty}^{tr}\nonumber
\end{align}
where
\begin{align}
M_\infty&=E_\infty^{tr}H_\infty{E}_{\infty}C\Sigma_{\infty}(E_{\infty}^{tr}H_{\infty}E_{\infty}C)^{tr}\nonumber\\
&+E_{\infty}^{tr}H_{\infty}E_{\infty}GG^{tr}(E_{\infty}^{tr}H_{\infty}E_{\infty})^{tr}+E_{\infty}^{tr}{\cal B}_{\infty}Q{\cal B}_{\infty}^{tr}E_t\nonumber
\end{align}
and $E_{\infty}$ is the unitary matrix that diagonalizes $\Lambda_{\infty}$ by
\begin{eqnarray}
E_{\infty}\Lambda_{\infty}E_{\infty}^{tr}=diag(\lambda_{\infty,1},\ldots,\lambda_{t,p}).\nonumber
\end{eqnarray}
Also,
\begin{eqnarray}
\delta_{\infty,i} \triangleq \left\{ \begin{array}{ll} \xi_\infty & \mbox{if} \quad \xi_\infty\leq\lambda_{\infty,i} \\
\lambda_{\infty,i} &  \mbox{if}\quad\xi_\infty>\lambda_{\infty,i} \end{array} \right.,~i=1,\ldots,p\nonumber
\end{eqnarray}
satisfying $\sum_{i=1}^p\delta_{\infty,i}=D$. Define $\Delta_{\infty}\tri{diag}(\delta_{\infty,1},\ldots,\delta_{\infty,p})$.\\
Finally, we show matching of the source to the channel.
\begin{align}
&R^{na}(D)\tri\lim_{t\rightarrow\infty}\frac{1}{n+1}R^{na}_{0,n}(D)\nonumber\\
&=\lim_{n\rightarrow\infty}\frac{1}{2}\frac{1}{n+1}\sum_{t=0}^n\sum_{i=1}^p\log\Big(\frac{\lambda_{t,i}}{\delta_{t,i}}\Big)=\frac{1}{2}\sum_{i=1}^p\log\Big(\frac{\lambda_{\infty,i}}{\delta_{\infty,i}}\Big)\nonumber\\
&=\frac{1}{2}\log\frac{|\Lambda_{\infty}|}{|\Delta_{\infty}|}=\frac{1}{2}\log(1+\frac{P}{Q})=C(P).\nonumber
\end{align}
Thus, for a given $(D,P)$, $C(P)=R^{na}(D)$ is the minimum capacity under which there exists a realizable filter for the data reproduction of $\{Y_t:~t\in\mathbb{N}\}$ by $\{\tilde{Y}_t:~t\in\mathbb{N}\}$ ensuring an average distortion equal to $D$. This is precisely the so-called source-channel matching with symbol-by-symbol transmission.




\bibliographystyle{IEEEtran}

\bibliography{photis_references_ecc2013}

\begin{thebibliography}{10}
\providecommand{\url}[1]{#1}
\csname url@samestyle\endcsname
\providecommand{\newblock}{\relax}
\providecommand{\bibinfo}[2]{#2}
\providecommand{\BIBentrySTDinterwordspacing}{\spaceskip=0pt\relax}
\providecommand{\BIBentryALTinterwordstretchfactor}{4}
\providecommand{\BIBentryALTinterwordspacing}{\spaceskip=\fontdimen2\font plus
\BIBentryALTinterwordstretchfactor\fontdimen3\font minus
  \fontdimen4\font\relax}
\providecommand{\BIBforeignlanguage}[2]{{%
\expandafter\ifx\csname l@#1\endcsname\relax
\typeout{** WARNING: IEEEtran.bst: No hyphenation pattern has been}%
\typeout{** loaded for the language `#1'. Using the pattern for}%
\typeout{** the default language instead.}%
\else
\language=\csname l@#1\endcsname
\fi
#2}}
\providecommand{\BIBdecl}{\relax}
\BIBdecl

\bibitem{bucy}
R.~S. Bucy, ``Distortion rate theory and filtering,'' \emph{IEEE Transactions
  on Information Theory}, vol.~28, no.~2, pp. 336--340, {M}ar. 1982.

\bibitem{gorbunov91}
A.~K. Gorbunov and M.~S. Pinsker, ``Asymptotic behavior of nonanticipative
  epsilon-entropy for {G}aussian processes,'' \emph{Problems of Information
  Transmission}, vol.~27, no.~4, pp. 361--365, 1991.

\bibitem{gastpar2003}
M.~Gastpar, B.~Rimoldi, and M.~Vetterli, ``To code, or not to code: {L}ossy
  source-channel communication revisited,'' \emph{IEEE Transactions on
  Information Theory}, vol.~49, no.~5, pp. 1147--1158, {M}ay 2003.

\bibitem{tatikonda-mitter2004b}
S.~Tatikonda and S.~Mitter, ``Control under communication constraints,''
  \emph{IEEE Transactions on Automatic Control}, vol.~49, no.~7, pp.
  1056--1068, July 2004.

\bibitem{nair-evans2004}
G.~N. Nair and R.~J. Evans, ``{Stabilizability of Stochastic Linear Systems
  with Finite Feedback Data Rates},'' \emph{SIAM Journal on Control and
  Optimization}, vol.~43, no.~2, pp. 413--436, 2004.

\bibitem{berger}
T.~Berger, \emph{Rate Distortion Theory:~A Mathematical Basis for Data
  Compression}.\hskip 1em plus 0.5em minus 0.4em\relax Englewood Cliffs, NJ:
  Prentice-Hall, 1971.

\bibitem{massey90}
J.~L. Massey, ``Causality, feedback and directed information,'' in
  \emph{International Symposium on Information Theory and its Applications
  (ISITA~'90)}, {N}ov. 27-30 1990, pp. 303--305.

\bibitem{charalambous-stavrou2012}
C.~D. Charalambous and P.~A. Stavrou, ``Directed information on abstract
  spaces:~properties and extremum problems,'' in \emph{IEEE International
  Symposium on Information Theory (ISIT)}, {J}uly 1-6 2012, pp. 518--522, an
  extended version is submitted in {\it IEEE Transactions on Information
  Theory} and it is available online at http://arxiv.org/abs/1302.3971.

\bibitem{gorbunov-pinsker}
A.~K. Gorbunov and M.~S. Pinsker, ``Nonanticipatory and prognostic epsilon
  entropies and message generation rates,'' \emph{Problems of Information
  Transmission}, vol.~9, no.~3, pp. 184--191, {J}uly-{S}ept. 1973.

\bibitem{dupuis-ellis97}
P.~Dupuis and R.~S. Ellis, \emph{A Weak Convergence Approach to the Theory of
  Large Deviations}.\hskip 1em plus 0.5em minus 0.4em\relax John Wiley \& Sons,
  Inc., New York, 1997.

\bibitem{stavrou-charalambous2013c}
\BIBentryALTinterwordspacing
P.~A. Stavrou and C.~D. Charalambous, ``Nonanticipative rate distortion
  function and filtering theory:~{A} weak convergence approach,''
  \emph{submitted to Systems and Control Letters}, 2013. [Online]. Available:
  \url{http://arxiv.org/abs/1212.6643}
\BIBentrySTDinterwordspacing

\bibitem{farzad06}
F.~Rezaei, N.~U. Ahmed, and C.~D. Charalambous, ``Rate distortion theory for
  general sources with potential application to image processing,''
  \emph{International Journal of Applied Mathematical Sciences}, vol.~3, no.~2,
  pp. 141--165, 2006.

\end{thebibliography}

\end{document}